\documentclass{llncs}

\usepackage{amsmath}
\usepackage{amssymb}
\usepackage{microtype}

\DeclareMathOperator{\tw}{tw}

\begin{document}

\title{Tree-width of hypergraphs and surface duality}

\author{Fr\'ed\'eric Mazoit%
  \thanks{Research supported by the french ANR-project "Graph
    decompositions and algorithms (GRAAL)".}}

\institute{LaBRI Universit\'e Bordeaux,\\
  351 cours de la Lib\'eration F-33405 Talence cedex, France\\
  \email{Frederic.Mazoit@labri.fr}}

\maketitle

\begin{abstract}
  In Graph Minor III, Robertson and Seymour conjecture that We prove
  that given a hypergraph $H$ on a surface of Euler genus $k$, the
  tree-width of $H^*$ is at most the maximum of $\tw(H)+1+k$ and the
  maximum size of a hyperedge of $H^*$.
\end{abstract}

%%%%%%%%%%%%%%%%%%%%%%%%%%%%%%%%%%%%%%%%%%%%%%%%%%%%%%%%%%%%%%%%%%%%%%%%
\section{Preliminaries}
A \emph{surface} is a connected compact 2-manyfold without boundaries.
A surface $\Sigma$ can be obtained, up to homeomorphism, by adding
$k(\Sigma)$ ``crosscaps'' to the sphere. $k(\Sigma)$ is the
\emph{Euler genus} or just \emph{genus} of the surface.

Let $\Sigma$ be a surface.  A graph $G=(V,E)$ on $\Sigma$ is a drawing
of a graph in $\Sigma$, i.e. each vertex $v$ is an element of
$\Sigma$, each edge $e$ is an open curve between two vertices, and
edges are pairwise disjoint. We only consider graphs up to
homomorphism. A face of $G$ is a connected component of
$\Sigma\setminus G$. We denote by $V(G)$, $E(G)$ and $F(G)$ the
vertex, edge and face sets of $G$.  We only consider \emph{2-cell}
graphs, i.e. graph whose faces are homeomorphic to open discs. The
Euler formula links the number of vertices, edges and faces of a graph
$G$ to the genus of the surface
\begin{equation*}
  |V(G)|-|E(G)|+|F(G)|=2-k(G).
\end{equation*}
The set $A(G)=V(G)\cup E(G)\cup F(G)$ of \emph{atoms} of $G$ is a
partition of $\Sigma$. Two Atom $x$ and $y$ of $G$ are \emph{incident}
if $x\cap\bar y$ or $y\cap \bar x$ is non empty, $\bar z$ being the
closure of $z$.  A \emph{cut-edge} in a graph $G$ on $\Sigma$ is an
edge $e$ separates $G$, i.e. $G$ intersects at least two connected
components of $\Sigma\setminus \bar e$. As an example, if a planar
graph $G$ has a cut-vertex $u$, any loop on $u$ that goes ``around'' a
connected component of $G\setminus\{u\}$ is a cut-edge.

Let $G=(V\cup V_E,L)$ be a bipartite graph on $\Sigma$. The graph $G$
can be seen as the incidence graph of a hypergraph. For each $v_e\in
V_E$, merge $v_e$ and its incident edges into a \emph{hyperedge} $e$,
and call $v_e$ its \emph{center}. Let $E$ be the set of all
\emph{hyperedges}. A \emph{hypergraph on $\Sigma$} is any such pair
$H=(V,E)$. For brevity, we also say \emph{edges} for hyperedges.  We
extend the notions of cut-edges, 2-cell graphs, atoms and incidence to
hypergraphs. Moreover, since they naturally correspond to abstract
graphs and hypergraphs, graph and hypergraph on surface inherit
terminology from them. For example, we denote $|e|$ the number of
vertices incident to a hyperedge $e$, and we denote $\alpha(H)$ the
maximum size of an edge of $H$. Note that a graph on $\Sigma$ is also
a hypergraph on $\Sigma$.

%%% duality
The dual of a hypergraph $H=(V,E)$ on $\Sigma$ is obtained by choosing
a vertex $v_f$ for every face $f$ of $H$. For every edge $e$ of center
$v_e$, we pick up an edge $e^*$ as follows: choose a local orientation
of the surface around $v_e$. This local orientation induces a cyclic
order $v_1$, $f_1$, $v_2$, $f_2$, \dots, $v_d$, $f_d$ of the ends of
$e$ and of the faces incident with $e$ (possibly with repetition). The
edge $e^*$ is the edge obtained by ``rotating'' $e$ and whose ends are
$v_{f_1}$, \dots, $v_{f_d}$.

%%% tree-decomposition
A \emph{tree-decomposition} of a hypergraph $H$ on $\Sigma$ is a pair
$\mathcal{T}=(T,(X_v)_{v\in V(T)})$ with $T$ a tree and $(X_v)_{v\in
  V(T)}$ a family of \emph{bags} such that:
\begin{enumerate}
\item[i.] $\bigcup_{v\in V(T)}X_v=H$;
\item[ii.] $\forall x$, $y$, $z\in V(T)$ with $y$ on the path from $x$
  to $z$, $X_x\cap X_z\subseteq X_y$.
\end{enumerate}
The \emph{width} of $\mathcal{T}$ is
$\tw(\mathcal{T})=\max\bigl(|V(X_t)|-1\;;\; t\in V(T)\bigr)$ and the
\emph{tree-width} $\tw(H)$ of $H$ is the minimum width of one of its
tree-decompositions.

Tree-width was introduced by Robertson and Seymour in connection with
graph minors.  In~\cite{RoSe84a}, they conjectured that for a planar
graph $G$, $\tw(G)$ and $\tw(G^*)$ differ by at most one. In an
unpublished paper, Lapoire~\cite{La96b} proves a more general result:
for any hypergraph $H$ in an orientable surface $\Sigma$,
$\tw(H^*)\leq\max(\tw(H)+1+k(\Sigma),\alpha(H^*)-1)$.  Nevertheless,
his proof is rather long and technical. Later, Bouchitt\'e et
al.~\cite{BoMaTo03a} gave an easier proof for planar graphs. Here we
generalises Lapoire's result to arbitrary surfaces while being less
technical.

To avoid technicalities, we suppose that $H$ is connected, contains at
least two edges, has no pending vertices (i.e. vertices incident with
only one edge) and no cut-edge.

%%%%%%%%%%%%%%%%%%%%%%%%%%%%%%%%%%%%%%%%%%%%%%%%%%%%%%%%%%%%%%%%%%%%%%%%
\section{P-trees and duality}

From now on, $H=(V,E)$ is a hypergraph on a surface $\Sigma$. The
\emph{border} of a partition $\mu$ of $E$ is the set of vertices
$\delta(\mu)$ that are incident with edges in at least two parts of
$\mu$, and the border of $X\subseteq E$ is the border of the partition
$\{X,E\setminus X\}$. A partition $\mu=\{X_1, \dots, X_p\}$ of $E$ is
\emph{connected} if there is a \emph{connecting partition} $\{V_1,
X_1, F_1, \dots, V_p, X_p, F_p\}$ of $A(H)\setminus\delta(\mu)$ so
that each $V_i\cup X_i\cup F_i$ is connected in $\Sigma$.
% Note that $V_i$ is the set of vertices only incident to edges in
% $X_i$.

A \emph{p-tree} of $H$ is a tree $T$ whose internal nodes have degree
three and whose leaves are labelled with the edges of $H$ in a
bijective way. Removing an internal node $v$ of $T$ results in a
partition $\mu_v$ of $E$. Labelling each internal node $v$ of $T$ with
$\delta(\mu_v)$, turns $T$ into a tree-decomposition. The
\emph{tree-width} of a p-tree is its \emph{tree-width}, seen as a
tree-decomposition.  A p-tree is \emph{connected} if all its nodes
partitions are connected.

Let $\{A,B\}$ be a connected bipartition of $H$ and $\{V_A, A, F_A,
V_B, B, F_B\}$ a corresponding connecting partition. We define a
\emph{contracted} hypergraph $H/A$ as follows. Consider the incidence
graph $G_H(V\cup V_E,L)$ of $H$, and identify the edges in $A$ with
their centers. By adding edges trough faces in $F_A$, we can make
$G_H[A\cup V_A]$ connected. We then contract $A\cup V_A$ into a single
edge center $v_A$.  To make the resulting graph bipartite, we remove
all $v_A$-loops. When removing a loop $e$ incident to only one face
$F$, the new face $F\cup e$ is not a disc but a crosscap. Since the
border of $F\cup e$ is a loop, we can ``cut'' $\Sigma$ along this loop
and replace $F\cup e$ by an open disc while decreasing the genus of
the surface. The obtained graph is the bipartite graph of $H/A$. A
connected partition $\{A,B\}$ is \emph{non trivial} if neither $H/A$
nor $H/B$ are equal to $H$.

We need the following folklore lemma:
\begin{lemma}\label{lem:splitting}
  For any connected bipartition $\{A,B\}$ of $H$,
  $\tw(H)\leq\max\bigl(\tw(H/A),\break\tw(H/B)\bigr)$.  If
  $\delta(\{A,B\})$ belongs to a bag of an optimal tree-decomposition,
  then $\tw(H)=\max\bigl(\tw(H/A),\tw(H/B)\bigr)$.
\end{lemma}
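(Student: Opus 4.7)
The plan is to prove both inequalities constructively by assembling or dismantling tree-decompositions.

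For the first inequality, I take optimal tree-decompositions $\mathcal{T}_A=(T_A,(X_u))$ of $H/A$ and $\mathcal{T}_B=(T_B,(Y_v))$ of $H/B$. The contracted hyperedge $e_A$ of $H/A$ has vertex set exactly $\delta(\{A,B\})$, so by the hypergraph tree-decomposition axiom it occurs in some bag $X_{u_0}$ together with all of $\delta(\{A,B\})$; pick such a $u_0$, and symmetrically a node $v_0$ of $T_B$ whose bag contains $e_B$ and $\delta(\{A,B\})$. Form $T$ by gluing $T_A$ and $T_B$ with a new edge $u_0v_0$, keep every original bag, and delete the fictitious edges $e_A$ from $X_{u_0}$ and $e_B$ from $Y_{v_0}$. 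Checking the tree-decomposition axioms for $H$ is routine: each edge in $B$ still appears in its $\mathcal{T}_A$-bag with its full vertex set (those vertices all lie in $V_B\cup\delta(\{A,B\})$, so nothing has been dropped), and symmetrically for edges in $A$; vertices in $V_A$ (resp.\ $V_B$) appear only on the $\mathcal{T}_B$ (resp.\ $\mathcal{T}_A$) side, where their support subtree was already connected; and a vertex $x\in\delta(\{A,B\})$ lies in both $X_{u_0}$ and $Y_{v_0}$, so its two subtrees merge through the glue edge. Since the vertex part of each bag is unchanged, the width of the new decomposition is $\max(\tw(\mathcal{T}_A),\tw(\mathcal{T}_B))=\max(\tw(H/A),\tw(H/B))$.

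For the equality, assume an optimal tree-decomposition $\mathcal{T}=(T,(Z_t))$ of $H$ has a bag $Z_{t_0}$ containing $\delta(\{A,B\})$. I build a decomposition of $H/A$ by intersecting every bag with $V(H/A)\cup B$, i.e.\ dropping the vanished vertices of $V_A$ and the edges of $A$, and then inserting $e_A$ into the modified $Z_{t_0}$. The axioms descend from those of $\mathcal{T}$: every edge in $B$ still appears with its unchanged vertex set, $e_A$ appears with $\delta(\{A,B\})\subseteq Z_{t_0}$, and the support subtrees of the surviving vertices only shrink. Hence $\tw(H/A)\le\tw(H)$; the symmetric argument gives $\tw(H/B)\le\tw(H)$, which together with the first inequality yields the claimed equality.

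The only delicate point is the observation that the hyperedges $e_A$ and $e_B$ force the whole set $\delta(\{A,B\})$ into a single bag of each of $\mathcal{T}_A$ and $\mathcal{T}_B$; this is precisely what makes the glue legal, and it is what the equality hypothesis asserts for $\mathcal{T}$. The fact that edges of $B$ are disjoint from $V_A$ (and symmetrically), which is used silently to see that dropping $V_A$ does not orphan any vertex of a $B$-edge, is a direct consequence of the definition of a connecting partition.
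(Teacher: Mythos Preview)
The paper does not actually prove this lemma: it introduces it with the phrase ``We need the following folklore lemma'' and gives no argument. So there is nothing to compare your proof against line by line; the question is only whether your argument is a valid reconstruction of the folklore.

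It is. Your two constructions---gluing optimal decompositions of $H/A$ and $H/B$ along bags containing $e_A$ and $e_B$, and, conversely, restricting an optimal decomposition of $H$ whose bag at $t_0$ contains $\delta(\{A,B\})$---are exactly the standard clique-sum argument that the word ``folklore'' is pointing to. You correctly isolate the one non-automatic point, namely that the contracted hyperedges $e_A$ and $e_B$ carry the full border $\delta(\{A,B\})$ as their vertex set, which is what forces that set into a single bag on each side and makes the splice legitimate. Your closing remark about vertices of $V_A$ not being incident with edges of $B$ is the right thing to flag; strictly speaking this uses that the connecting partition is chosen so that each $V_i$ collects only vertices whose incident edges lie in $X_i$, which is how the construction of $H/A$ in the paper is meant to be read (otherwise $H/A$ would depend on an arbitrary choice and the lemma would not even be well-posed). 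With that reading, your verification of the axioms goes through, and the widths match as claimed.
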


Let $S$ be a set of vertices of $H$. An \emph{$S$-bridge} is a minimal
subset $X$ of $E$ with the property that $\delta(X)\subseteq S$. There
are two kind of $S$-bridges: singletons containing an edge whose ends
all belong to $S$ and sets $E_C$ containing all the edges incident to
at least one vertex in $C$, a connected component of $G\setminus S$.
The $S$-bridges partition $E$. We define the abstract graph $G_{/S}$
whose vertices are the $S$-bridges and in which $\{X,Y\}$ is an edge
if there is a face incident with both an edge in $X$ and an edge in
$Y$. A key fact is that any bipartition $\{A,B\}$ of $V(G_{/S})$ such
that $G_{/S}[A]$ and $G_{/S}[B]$ is connected corresponds to the
connected bipartition $\{\cup A, \cup B\}$.

\begin{proposition}\label{prop:1}
  There exists a connected p-tree $T$ of $H$ with $\tw(T)=\tw(H)$.
\end{proposition}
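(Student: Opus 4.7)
The plan is to proceed by induction on $|E(H)|$. In the base case $|E(H)|=2$, the unique p-tree consists of a single edge between two leaves (no internal nodes), so connectedness is vacuous and the width obviously matches $\tw(H)$. The inductive step is driven by Lemma~\ref{lem:splitting}, applied to a carefully chosen non-trivial connected bipartition of $E$.

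Fix an optimal tree-decomposition $\mathcal{T}^*$ of $H$ with bag family $(X_t)_{t\in V(T^*)}$. The first step is to exhibit a non-trivial connected bipartition $\{A,B\}$ of $E$ whose border $\delta(\{A,B\})$ lies in some bag of $\mathcal{T}^*$. For any edge $st$ of $T^*$, the two subtrees on either side induce a bipartition of $E$ whose border is contained in $X_s\cap X_t$, and hence in a bag. To upgrade this to a \emph{connected} bipartition, I would set $S=X_s$ and use the $S$-bridge machinery introduced just above the statement: any spanning tree of $G_{/S}$, minus one of its edges, produces a bipartition $\{A',B'\}$ of $V(G_{/S})$ into two connected subgraphs, which by the key fact corresponds to a connected bipartition of $E$ with border still in $S\subseteq X_s$. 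The non-triviality ($|A|,|B|\geq 2$) can be arranged by picking the spanning edge of $G_{/S}$ suitably, using that $H$ has no cut-edges, no pending vertices, and at least two edges.

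Once $\{A,B\}$ is in hand, Lemma~\ref{lem:splitting} gives $\tw(H)=\max\bigl(\tw(H/A),\tw(H/B)\bigr)$. Since $\{A,B\}$ is non-trivial, both $H/A$ and $H/B$ have strictly fewer edges than $H$, so the induction hypothesis furnishes connected p-trees $T_A$ of $H/A$ and $T_B$ of $H/B$ of optimal width. I would then glue $T_A$ and $T_B$ together by deleting the leaf of $T_A$ labelled by the contracted edge $v_A$ and the leaf of $T_B$ labelled by $v_B$, identifying the two resulting pendant edges, and suppressing the middle degree-$2$ vertex so that all internal nodes recover degree three. The resulting tree $T$ has leaves in bijection with $A\cup B=E$, and one checks $\tw(T)=\max(\tw(T_A),\tw(T_B))=\tw(H)$. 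Connectedness of each internal partition of $T$ is inherited: at the gluing node the partition is $\{A,B\}$, which is connected by construction; at an internal node coming from $T_A$, the partition of $E(H/A)$ is connected, and the corresponding connecting partition extends to one of $E(H)$ by pasting in the connecting partition of $\{A,B\}$ restricted to the $B$-side (symmetrically for $T_B$).

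The main obstacle is the combinatorial step of finding the non-trivial connected bipartition whose border sits inside a bag of $\mathcal{T}^*$; this is exactly the place where the structural hypotheses on $H$ (connected, $\geq 2$ edges, no pending vertex, no cut-edge) are used, through the connectedness of $G_{/S}$ and the existence of a suitable edge in its spanning tree. Everything else—the width bookkeeping, the gluing of p-trees, and the transfer of connecting partitions across the gluing—is routine once this step is established.
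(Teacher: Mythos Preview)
Your approach is the same as the paper's: induct on $|E|$, find a non-trivial connected bipartition $\{A,B\}$ whose border lies in a bag of an optimal decomposition, contract each side, recurse via Lemma~\ref{lem:splitting}, and splice the two p-trees together. The gluing and the transfer of connecting partitions are fine (a small inaccuracy: there is no single ``gluing node'' carrying the bipartition $\{A,B\}$; the two internal nodes adjacent to the new edge each carry a tripartition one of whose blocks is $A$ or $B$, and connectedness is checked exactly as you describe).

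There is, however, a genuine gap in the step you flag as the main obstacle. Your construction presupposes an edge $st$ in the optimal decomposition, so that $S=X_s$ is a proper separator and $G_{/S}$ has at least two non-singleton bridges; then removing a suitably chosen spanning-tree edge indeed yields both sides of size $\geq 2$. But when the trivial one-bag decomposition is already optimal there is no such edge, and if you take $S=V$ then every $S$-bridge is a singleton. In that situation your spanning-tree trick can fail: a spanning star of $G_{/V}$ (e.g.\ when $G_{/V}=K_4$) gives only $1$-versus-$(|E|-1)$ splits no matter which edge you delete, so non-triviality cannot be ``arranged by picking the spanning edge suitably''. The paper treats this case separately, using that with $|E|\geq 4$ and $H$ having no cut-edge the graph $G_{/V}$ has at least four vertices and no cut vertex, and that any $2$-connected graph on $\geq 4$ vertices admits a bipartition into two connected pieces each of size $\geq 2$. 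Without this extra argument your induction stalls precisely when $\tw(H)=|V|-1$.
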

\begin{proof}
  By induction on $|E|$, if $|E|\leq 3$, since $H$ has no cut-edge,
  the only p-tree is connected and optimal. We can suppose that
  $|E|\geq 4$. We claim that there exists a connected non trivial
  bipartition $\{A,B\}$ of $E$ whose border is contained in a bag of
  an optimal tree-decomposition of $H$. Two cases arise:
  \begin{itemize}
  \item If the trivial one vertex tree-decomposition whose bag is $H$
    is optimal, we consider the graph $G_{/V}$. Since they are in
    bijection with the edges of $H$, and since $H$ has no cut edge,
    $G_{/V}$ has at least four vertices and no cut vertex. There thus
    exists a bipartition $\{A,B\}$ of $V(G_{/V})$ with $|A|,|B|\geq
    2$, $G_{/V}[A]$ and $G_{/V}[B]$ connected which gives a connected
    non trivial bipartition of $E$.
  \item Otherwise, there exists a separator $S$ contained in a bag of
    an optimal tree-decomposition of $H$. Let $C$ and $D$ be two
    connected component of $H\setminus S$, and $S_C$ and $S_C$ their
    corresponding $S$-bridges. Since $H$ contains no pending vertex,
    $|S_C|, |S_D|\geq 2$. Let $x$ and $y$ be the vertices of $G_{/S}$
    corresponding to $S_C$ and $S_D$. Take a spanning tree of
    $G_{/S}$. Removing an edge between $x$ and $y$ leads to a
    connected non-trivial bipartition of $E$, which finishes the proof
    of the claim.
  \end{itemize}

  Since $\{A, B\}$ is connected, $e_A$ and $e_B$ are respectively not
  cut-edges in $H/A$ and $H/B$. By induction, there exists connected
  p-trees $\mathcal{T}_A$ and $\mathcal{T}_B$ of optimal width of
  $H/A$ and $H/B$. By removing the leaves labelled $e_A$ and $e_B$ and
  adding an edge between their respective neighbour, we obtain from
  $\mathcal{T}_A\sqcup\mathcal{T}_B$ a p-tree of $H$ which is
  connected. Its width is $\max(\tw(\mathcal{T}/A),
  \tw(\mathcal{T}/B))$ which is equal, by Lemma~\ref{lem:splitting} to
  $\tw(H)$.\qed
\end{proof}

Because of the natural bijection between $E(H)$ and $E(H^*)$, a p-tree
$T$ of $H$ also corresponds to a p-tree $T^*$ of $H^*$.
\begin{proposition}\label{prop:2}
  For any connected p-tree $T$ of $H$,
  \begin{equation*}
    \tw(T^*)\leq\max(\tw(T)+1+k(\Sigma),\alpha(H^*)-1).
  \end{equation*}
\end{proposition}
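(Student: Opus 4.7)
The plan is to prove that for every internal node $v$ of $T$ the inequality
\[
  |\delta(\mu_v^*)|\;\le\;|\delta(\mu_v)|+1+k(\Sigma)
\]
holds; since $\tw(T^*)\ge\alpha(H^*)-1$ comes for free from the leaves of $T^*$ (each leaf-edge $e^*$ forces all ends of $e^*$ into the adjacent bag) and the claimed $\max$ absorbs this, this will suffice. Fix $v$ and write $\mu=\mu_v=\{X_1,X_2,X_3\}$, $D=\delta(\mu)$, $D^*=\delta(\mu^*)$, and let $R_1,R_2,R_3\subseteq\Sigma$ be the three connecting regions supplied by the connectedness of $\mu$.

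The central object is an auxiliary bipartite graph $\Gamma$ embedded in $\Sigma$: its vertex set is $D\cup\{p_1,p_2,p_3\}$ where each $p_i$ is a point chosen inside $R_i$, and it has one arc $vp_i$, drawn through $R_i$, for every $v\in D$ incident to an edge of $X_i$. Since each $R_i$ is connected the arcs can be routed pairwise disjointly, making $\Gamma$ a simple bipartite graph on $\Sigma$. The absence of pending vertices in $H$ forces every $v\in D$ to have $\Gamma$-degree $\ge 2$, and the fact that every hyperedge of $H$ has at least two ends forces every $p_i$ to have degree $\ge 2$; together with bipartiteness this rules out faces of length less than $4$ in $\Gamma$. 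Combining Euler's inequality $|V(\Gamma)|-|E(\Gamma)|+|F(\Gamma)|\ge 2-k(\Sigma)$ with the handshake bound $2|E(\Gamma)|\ge 4|F(\Gamma)|$ then yields
\[
  |F(\Gamma)|\;\le\;|V(\Gamma)|-2+k(\Sigma)\;=\;|D|+1+k(\Sigma).
\]

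It then remains to exhibit an injection $D^*\hookrightarrow F(\Gamma)$. For $f\in D^*$ the border vertices lying on $\partial f$ split $\partial f$ into arcs each of which is contained in a single part $X_i$, because any change of part along $\partial f$ forces an intermediate vertex of $\delta(\mu)$. The induced cyclic sequence $v_1 p_{i_1} v_2 p_{i_2}\cdots v_m p_{i_m} v_1$ is a closed walk in $\Gamma$ whose image in $\Sigma$ encloses $f$, so it bounds a face of $\Gamma$ with which I associate $f$; distinct $f\in D^*$ enclose topologically distinct regions and therefore get assigned distinct faces.

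The step I expect to be the main obstacle is the injection. When a region $R_i$ contains few non-border atoms, the arcs of $\Gamma$ may be forced to cut through a face in $D^*$, splitting it into several sectors, so one must route the arcs carefully (and possibly choose which sector to pair with $f$) to keep the map $D^*\to F(\Gamma)$ well-defined and injective. The remaining technicalities---verifying minimum degree $\ge2$ everywhere in $\Gamma$, ruling out length-$2$ faces arising from cut-edges, and treating orientable and non-orientable $\Sigma$ uniformly via the Euler genus---are routine once the geometric setup is in place.
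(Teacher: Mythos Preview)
Your approach is essentially the paper's: both build a bipartite graph on $\Sigma$ with vertex set $\delta(\mu_v)\cup\{\text{three part-vertices}\}$, observe that bipartiteness plus minimum face length $4$ gives $|E|\ge 2|F|$, plug this into Euler's formula to obtain $|F|\le |\delta(\mu_v)|+1+k(\Sigma)$, and then argue that the faces of this auxiliary graph dominate the dual border $\delta(\mu_v^*)$.

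The difference is in how the auxiliary graph is produced, and this is exactly where your acknowledged obstacle lives. You pick points $p_i\in R_i$ and try to route disjoint arcs through the connecting regions; the paper instead contracts each connected region $V_i\cup X_i\cup F_i$ (inside the incidence graph $G_H$) to a single centre, and then repairs the embedding by cutting $\Sigma$ along any loop that appears, filling the holes with discs. Each such cut adds one ``sibling'' vertex and removes one unit of genus, so the Euler computation is unaffected. This contraction-and-surgery packaging dissolves both of your difficulties at once. First, your routing claim ``since each $R_i$ is connected the arcs can be routed pairwise disjointly'' is not automatic when $R_i$ has positive genus or meets a border vertex along several arcs; the loops that show up after contraction are precisely the obstruction, and the paper pays for them explicitly. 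Second, your injection $D^*\hookrightarrow F(\Gamma)$ becomes a triviality in the paper's version: since one only adds edges and contracts, every face of $H$ survives inside some face of $G_v$, so $|F(G_v)|\ge |D^*|$ with no need to chase closed walks around $\partial f$. So your outline is correct, but the obstacle you flag is real, and the contraction argument is the clean way past it.

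One smaller point: your reason for $\deg_\Gamma(p_i)\ge 2$ (``every hyperedge of $H$ has at least two ends'') is not the right one; a part $X_i$ could consist of many edges yet touch only one border vertex. What you actually need is that each part meets at least two border vertices, and that comes from the standing no-cut-edge hypothesis rather than from edge sizes.
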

\begin{proof}
  Let $v$ be a vertex of $T$ labelled $X_v$ in $T$ and $X^*_v$ in
  $T^*$.  If $v$ is a leaf, then $X^*_v=\{e^*\}$ and
  $|X^*_v|-1\leq\max(\tw(T)+1+k(\Sigma),\alpha(H^*)-1)$.  Otherwise,
  let $\{A, B, C\}$ be the $E$-partition associated to $v$. The label
  of $v$ in $T$ and $T^*$ is respectively $X_v=\delta(\{A,B,C\})$ and
  $X^*_v$, the set of faces incident with edges in at least two parts
  among $A$, $B$ and $C$.

  As for the proof of Proposition~\ref{prop:1}, since $\{A, B, C\}$ is
  connected, we may contract $A$ (and $B$ and $C$). But since we now
  care about the faces of $H$, we have to be more careful. We want an
  upper bound on $|X^*_v|$, we may thus add but not remove faces to
  $X^*_v$. So adding edges to make $G_H[A\cup V_A]$ connected is OK,
  but we cannot remove a loop $e$ on say $v_A$ incident with two faces
  in $X^*_v$. Instead, we cut $\Sigma$ along $e$ and fill the holes
  with open discs. While doing so, we removed $e$, we cut $v_A$ in two
  \emph{siblings}, and we decreased the genus of $\Sigma$.

  After contracting $A$, $B$ and $C$, we obtain a bipartite graph
  $G_v$ on $\Sigma'$ that has $|X_v|+3+s$ vertices with $s$ the number
  of siblings, at least $|X^*_v|$ faces and with $k(\Sigma')\leq
  k(\Sigma)-s$. Since $G_v$ is bipartite and faces in $X^*_v$ are
  incident with at least 4 edges, $2|E(G_v)|=4|F_4|+6|F_6|+\dots\geq
  4|F(G_v)|$ with $F_{2k}$ the set of $2k$-gones faces of $G_v$, and
  thus $|E(G_v)|\geq 2|F(G_v)|$. If we apply Euler's formula to $G_v$
  on $\Sigma'$, we obtain:
  $|X_v|+3+s-|E(G_v)|+|F(G_v)|=2-k(\Sigma')\geq 2-k(\Sigma)+s$. Adding
  this to $|E(G_v)|\geq 2|F(G_v)|$, we get $|X_v|+1+k(\Sigma)\geq
  |F(G_v)|\geq|X^*_v|$ which proves that
  $|X^*_v|-1\leq\max(\tw(T)+1+k(\Sigma),\alpha(H^*)-1)$, and thus
  $\tw(T^*)\leq\max(\tw(T)+1+k(\Sigma),\alpha(H^*)-1)$.\qed
\end{proof}

Let us now prove the main theorem.
\begin{theorem}
  For any hypergraph $H$ on a surface $\Sigma$,
  \begin{equation*}
    \tw(H^*)\leq\max\bigl(\tw(H)+1+k(\Sigma),\alpha(H^*)-1\bigr).
  \end{equation*}
\end{theorem}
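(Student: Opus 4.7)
The plan is essentially to chain together the two propositions that have just been established, since they are designed to fit together. By Proposition~\ref{prop:1}, under the standing assumptions on $H$ (connected, at least two edges, no pending vertices, no cut-edge), there exists a connected p-tree $T$ of $H$ satisfying $\tw(T) = \tw(H)$. The natural bijection between $E(H)$ and $E(H^*)$ described before Proposition~\ref{prop:2} turns $T$ into a p-tree $T^*$ of $H^*$, which by definition is a tree-decomposition of $H^*$, so $\tw(H^*) \leq \tw(T^*)$.

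Next I would apply Proposition~\ref{prop:2} to this specific $T$ to bound $\tw(T^*)$ by $\max(\tw(T)+1+k(\Sigma),\alpha(H^*)-1)$. Substituting $\tw(T) = \tw(H)$ from the choice of $T$ yields exactly the desired inequality
\begin{equation*}
  \tw(H^*) \leq \tw(T^*) \leq \max\bigl(\tw(H)+1+k(\Sigma),\alpha(H^*)-1\bigr).
\end{equation*}

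The only remaining issue is to justify the standing assumptions. If $H$ is disconnected, its tree-decomposition and dual both split into connected components and one applies the result componentwise, taking the max. Pending vertices and isolated edges contribute bags of size at most $\alpha(H^*)$ and can be grafted onto any tree-decomposition without increasing the bound. A cut-edge $e$ of $H$ gives rise in $H^*$ to an edge $e^*$ whose endpoints separate $H^*$; one can split $H$ along $e$ into smaller pieces, apply the theorem inductively, and reglue the resulting tree-decompositions at a bag containing the ends of $e^*$. Since these reductions are routine and were explicitly excluded at the start, I would simply remark that the theorem under the standing assumptions combined with these reductions gives the general statement.

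There is no serious obstacle here: the real work has been done in Propositions~\ref{prop:1} and~\ref{prop:2}, the first giving a connected p-tree realising $\tw(H)$ and the second giving the Euler-formula argument that controls the width of its dual. The theorem is the formal composition of the two.
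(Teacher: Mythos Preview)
Your proof is correct and follows exactly the paper's approach: invoke Proposition~\ref{prop:1} to get a connected p-tree $T$ with $\tw(T)=\tw(H)$, apply Proposition~\ref{prop:2} to bound $\tw(T^*)$, and use $\tw(H^*)\le\tw(T^*)$ to conclude. Your additional paragraph on reducing to the standing assumptions is more than the paper itself provides (it simply declared those assumptions earlier to avoid technicalities), but it does not change the argument.
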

\begin{proof}
  By Proposition~\ref{prop:1}, let $T$ be a connected p-tree of $H$
  such that $\tw(T)=\tw(H)$. By Proposition~\ref{prop:2},
  $\tw(T^*)\leq\max(\tw(T)+1+k(\Sigma),\alpha(H^*)-1)$. Since
  $\tw(H^*)\leq\tw(T^*)$, we deduce,
  $\tw(H^*)\leq\max(\tw(H)+1+k(\Sigma),\alpha(H^*)-1)$.\qed
\end{proof}


\begin{thebibliography}{BMT03}
\bibitem[BMT03]{BoMaTo03a}%
  V. Bouchitt\'e, F. Mazoit, and I. Todinca.%
  \newblock Chordal embeddings of planar graphs.%
  \newblock {\em Discrete Mathematics}, 273:85--102, 2003.

\bibitem[Lap96]{La96b}%
  D. Lapoire.%
  \newblock Treewidth and duality for planar hypergraphs.%
  \newblock Manuscript
  \texttt{http://www.labri.fr/perso/lapoire/papers/dual\_planar\_treewidth.ps},
  1996.

\bibitem[RS84]{RoSe84a}%
  N. Robertson and P.~D. Seymour.%
  \newblock Graph Minors. III. Planar Tree-Width.%
  \newblock {\em Journal of Combinatorial Theory Series B},
  36(1):49--64, 1984.
\end{thebibliography}
\end{document}